\newtheorem{prop}{Proposition}
\begin{document}

\begin{center}\huge{Observing Each Other's Observations in the Electronic Mail Game}\footnote{We thank Martin Hellwig
for comments and for spotting an error. We also received helpful
comments and questions from Brian Cooper and seminar participants
in Bonn.}
\end{center}
\begin{center}\emph{Dominik Grafenhofer}\\\emph{Max Planck Institute for Research on
Collective Goods, Kurt-Schumacher-Str. 10, 53113 Bonn, Germany.
Email: grafenhofer@coll.mpg.de.}
\end{center}
\begin{center}\emph{Wolfgang Kuhle}\\\emph{Max Planck Institute for Research on
Collective Goods, Kurt-Schumacher-Str. 10, 53113 Bonn, Germany.
Email: kuhle@coll.mpg.de.}
\end{center}

\noindent\emph{\textbf{Abstract:} We study a Bayesian coordination
game where agents receive private information on the game's payoff
structure. In addition, agents receive private signals on each
other's private information. We show that once agents possess
these different types of information, there exists a coordination
game in the evaluation of this information. And even though the
precisions of both signal types is exogenous, the precision with
which agents predict each other's actions at equilibrium turns out
to be endogenous. As a consequence, we find that there exist
multiple equilibria if the private signals' precision is high.
These equilibria differ with regard to the way that agents
weight their private information to reason about each other's actions.}\\
\textbf{Keywords: Coordination Games, Equilibrium Selection, Primary Signals, Secondary Signals}\\
\textbf{This version: \today} \vspace{.5cm}


\section{Introduction}\label{intro}

It is the nature of coordination games that one player's payoff
depends not only on his own action, but also on the action of
others. And if players can observe each other's actions, this
interdependence induces multiple equilibria. The same is not
necessarily true in environments where agents have only
probabilistic knowledge over each other's actions. \citet{Har73}
and \citet{Sel75} develop frameworks where one player cannot
observe the other player's payoffs, respectively, where players
may ``tremble", and show that the number of equilibria can be
drastically reduced. Such imperfect knowledge regarding the other
player's actions arises naturally in the games of \citet{Rub89}
and \citet{Car93}, where both players receive noisy private
information over an unknown coefficient that characterizes the
game's payoffs. In such models, players face payoff uncertainty as
in \citet{Har73}. At the same time, due to their imperfect private
knowledge of the payoffs, they tremble as in \citet{Sel75}.
Moreover, the models of \citet{Rub89} and \citet{Car93} predict
that players will play a unique risk-dominant equilibrium rather
than the multiple equilibria that they would play if actions and
payoffs were common knowledge. The present paper studies this
selection mechanism in the context of a more general information
structure. We show that once agents possess different types of
private information, there exists a coordination game in the
evaluation of this information. And even though the precisions of
both signal types is exogenous, the precision with which agents
predict each other's actions at equilibrium turns out to be
endogenous. As a consequence, there exist multiple equilibria if
the private signals' precision is high.

More precisely, the players of \citet{Rub89} and \citet{Car93}
reason only indirectly about each other. They receive a signal
regarding the game's payoffs, which they use to update their
beliefs regarding the game's coefficients. Second, knowing that
the other agent's signal is \emph{correlated} with their own, they
use their signal to infer the other player's posterior beliefs. In
the present paper, we argue that players usually know more than
that. Often players can observe \emph{directly} parts of the other
player's observation. That is, in the context of the coordinated
attack interpretation of the \citet{Rub89} ``electronic mail
game'', the general may observe that his messenger is off to a
``good start" since he already managed to cross the first valley.
Thus, the chances that he eventually arrives at the other camp are
better than usual. Put differently, the sender of the ``primary
message" knows that the other player most likely received a
message indicating that a particular game was chosen. Or,
alternatively, one general can observe from a distance that
someone entered the other general's camp. While he cannot be sure
whether it was his messenger or someone else, this observation
induces him to revise upward the probability with which the other
general received the news. On the other hand, if the sender sees
that the messenger is off to a bad start, then he knows that it is
less likely that the message will reach its receiver. The main
result of the paper shows that such a generalized information
structure induces multiple equilibria. If one player conditions
his actions on the \emph{``primary signal"}, which informs him of
the particular game chosen by nature, the other player will have
an incentive to condition on his \emph{``secondary signal"}, which
informs him of the \emph{``primary signal"} that the other player
received, and vice versa. Such asymmetric evaluation of signals
will maximize the precision with which players can anticipate each
other's actions. That is, even though the precision of both signal
types is exogenous, the precision with which agents anticipate
each other's actions at equilibrium turns out to be endogenous.
And there will exist multiple equilibria that differ regarding the
way that agents weight the different types of private information
that they receive, to reason about each other's actions.

In the present note, we formalize this intuition in the context of
the \citet{Rub89} electronic mail game. First, we introduce the
basic information structure, where agents purely rely on the
correlation of their private observations to infer the other
agent's beliefs. That is, in equilibrium, agents rely on the fact
that they did not receive a confirmation of their last message,
which may mean either that their last message did not reach the
receiver, or that the receiver's reply was lost. Within this
setting, we recall the main insight, namely that agents play a
unique risk-dominant equilibrium. In the following, we refer to
the basic signals from the electronic mail game as \emph{primary
signals}. In the main part of this note, we introduce a
\emph{secondary signal} which allows agents to make additional
inference on the other agent's observations. That is, we introduce
a noisy signal that allows players to infer \emph{directly} the
probability with which their signal reached the receiver. For the
modified setting, we find that if players observe each other's
signal with great precision, then they can coordinate on multiple
equilibria.

To interpret our findings, we compare two classes of
equilibria: (i) symmetric equilibria and (ii) asymmetric
equilibria. The distinguishing feature of a symmetric equilibrium
will be that agents weight their two signal types equally. In an
asymmetric equilibrium one agent leans heavily on the
\emph{secondary signal}, while the other agent has an incentive to
lean heavily on his \emph{primary signal} and vice versa. The
key feature is therefore that the two signals are
``cross complements''. That is, if player one relies heavily on
his \emph{secondary signal}, then the other player has an
incentive to rely on his \emph{primary signal}. Such an asymmetric
weighting of private signals enables agents to maximize the
precision with which they can forecast each other's actions.
Finally, to emphasize the importance of the class of asymmetric
equilibria, we show that asymmetric equilibria dominate symmetric
equilibria on efficiency grounds.

The main contribution of the present paper is the introduction of
a new class of private signals. Namely, signals about the other
player's signals. Moreover, we show that such information can
ensure multiple equilibria once private signals are sufficiently
precise. Compared to the literature, we note that \citet{Rub89},
\citet{Car93}, and \citet{Fra04} have studied two-action
coordination games, where agents receive what we call
\emph{primary signals} that allow them to make inference on the
game's unknown coefficients.\footnote{See \citet{Fra04} for a
broad literature overview on equilibrium selection through what we
call \emph{primary signals}. \citet{Car93}, pp. 1008-1010, for a
detailed comparison of their ``global games", which rely on
continuous distributions, with the ``electronic mail game" and its
discrete information structure.} Moreover, through the correlation
of private information, agents can reason about each other's
posteriors and actions. Regarding equilibrium selection, these
studies predict that unique equilibria are ensured once private
signals are sufficiently precise. The present example shows that
the existence of secondary \emph{private} signals can invert this
finding: multiple equilibria are ensured once the private signals
are sufficiently precise.

Regarding different types of signals, \citet{Mor04},
\citet{Hel02}, \citet{Met02}, and \citet{Ang06} emphasize the role
of public signals/common priors in the global games framework,
showing that such signals restore equilibrium multiplicity if
public signals are sufficiently precise compared to the private
signal; we give an example where multiplicity arises in pure
private signal environments. A further class of signals was
introduced by \citet{Min03}, \citet{Ang06}, and \citet{Das07}, who
study environments where agents observe each other's actions. Such
signals tend to induce unique equilibria in the two-player games
of \citet{Min03}, where signals over each other's actions are
perfectly revealing. \citet{Ang06}, and \citet{Das07} study public
signals that partially reveal the other player's actions. They
show that multiplicity may emerge if the public signal is of high
quality. \citet{Kuh13} gives an example where the public signal's
quality reduces the number of equilibria. Finally, \citet{Rub89}
points out that equilibrium multiplicity may reemerge once there
is a technical upper bound for the number of exchanged messages.
Similarly, multiplicity also obtains in the model of
\citet{Bin01}, where agents can \emph{decide} whether or not to
send electronic messages which are costly.

The rest of the paper is organized as follows. Section \ref{emg}
outlines our electronic mail game. In Section \ref{no_z_info}, we recall the uniqueness result for our modified game without secondary signals. Section \ref{with_z_info} contains the main result. Sections \ref{d} concludes.


\section{A symmetric electronic mail game}\label{emg}


There are two players $1$ and $2$. Each has two actions $A$ and
$B$ to choose from. And there is uncertainty about which game
$G_a$ or $G_b$ the two players are going to play. Games $a$ and
$b$ differ regarding their payoffs. Nature selects game $a$ with
probability $1-p$ and game $b$ with probability $p<\frac12$. The
game's payoff structures are:
$$\stackrel{\text{\normalsize Game $G_a$}}{\begin{array}{r|cc}
 & A & B \\ \hline
A & M,M & 0,-L\\
B & -L,0 & 0,0\\
\end{array}} \qquad\qquad
\stackrel{\text{\normalsize Game $G_b$}}{\begin{array}{r|cc}
 & A & B \\ \hline
A & 0,0 & 0,-L\\
B & -L,0 & M,M\\
\end{array}}$$
Moreover, we assume $L>M>0$. Hence, players face a coordination
problem in both states of the world: if players coordinate on
actions $A$ ($B$) in state $a$ ($b$), they receive $M$ each, while
coordination on $B$ ($A$) yields $0$ to both players. However, if
players fail to coordinate, i.e., choose different actions, then
the player who plays $B$ receives $-L$, and the payoff for playing
$A$ is $0$. Players receive private information on the game's
fundamental before they choose an action. The probability $p$, the
payoff structure, and the forthcoming \emph{communication
protocol} are common knowledge among players.

In state $a$, both players get information $T_1=T_2=0$. In state
$b$, \emph{one player is randomly selected} with probability
$\frac12$, and informed of the true state $b$. 
The selected player $i$ then sends a message to player $j$. The
message, however, is lost with probability $\varepsilon$. Upon
receiving a message, player $j$ sends a confirmation back to
player $i$ which is also lost with probability $\varepsilon$.
These messages are exchanged until finally one message is lost and
communication ends. Players 1 and 2 now choose their actions based
on the number of messages $T_1$ and $T_2$ that they received.

The present game therefore differs from the \citet{Rub89} game in
that it is random which player spots the actual game selected, and
starts to inform the other player. Moreover, we assume that both
players do not know who was selected to send the first
message.\footnote{In Appendix \ref{ref}, we derive our main
result for the original \citet{Rub89} game, where player 1 is
informed of the state of nature with probability 1.}
This symmetric structure accommodates a more natural
interpretation of the \emph{asymmetric} equilibria that players
play once we introduce \emph{secondary signals} that inform
players about each other's \emph{primary signals} $T_1$ and $T_2$.
That is, to bring out the source of multiple asymmetric
equilibria, we endow players with signals of identical precisions
and, unlike \citet{Rub89}, assume that the probability with which
they observe the true state of nature, is the same $(1/2)$ across
players. The propositions in this paper regarding the existence of
multiple asymmetric equilibria continue to hold once we set the
probability with which player 1 observes the true state of nature
to one as in \citet{Rub89}. Finally, in the context of the
coordinated attack interpretation of the electronic mail game, it
seems natural that players do not necessarily know that ``player
1" always observes the true state of nature first. As such, the
present specification may be seen as less restrictive.

\subsection{No direct information on the other player's
information} \label{no_z_info}

Before turning to our main result, we establish the uniqueness result
of \cite{Rub89} for our symmetric mail game.

\begin{prop}\label{sym_game}
There exists only one equilibrium in which player 1 plays $A$ in
the state of nature $a$. In this equilibrium, both players play
$A$, irrespective of the number of received messages $T_1$ and
$T_2$. \label{rub}\end{prop}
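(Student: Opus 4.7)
The plan is to prove the proposition by an infection-style induction in the spirit of Rubinstein's original argument. The payoff matrices of $G_a$ and $G_b$ imply that at any information type, action $A$ is the unique best response precisely when the conditional probability that the opponent plays $A$ strictly exceeds the cutoff $M/(L+M)$, which lies strictly below $1/2$ because $L>M$. Under the restriction $\sigma_1=A$ at the information type reached in state $a$, it therefore suffices to show that at every reachable information type the belief that the opponent plays $A$ strictly exceeds this cutoff.

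For the base step I would work at the uninformed player's $T_i=0$, which occurs with probability one in state $a$ and with probability of order $\varepsilon$ in state $b$ (the opponent is the one selected and the initial message is lost). Combined with the prior $p<1/2$, Bayes' rule delivers a posterior on state $a$ of at least $1-p>1/2$. Since in state $a$ the opponent is also uninformed with $T_j=0$ and plays $A$ by hypothesis, the belief that the opponent plays $A$ is at least $1-p$, which strictly exceeds $M/(L+M)$ and forces $\sigma_2(0)=A$.

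For the inductive step I would proceed along an alternating staircase of information types that interleaves ``selected'' and ``non-selected'' on the two players. At either an uninformed player's $T_i=n\geq 1$ or a selected player's $T_i=n$, enumerating the message sequences and averaging over which player was originally selected yields a two-point conditional distribution for $T_j$ with Bayesian weights $1/(2-\varepsilon)$ and $(1-\varepsilon)/(2-\varepsilon)$. The alternation is arranged so that the weight-$1/(2-\varepsilon)$ neighbour on the opponent's side always corresponds to an information type at which the inductive hypothesis has already fixed $\sigma_j=A$; consequently $\mathbb P(\mathrm{opp}=A\mid T_i=n)\geq 1/(2-\varepsilon)>1/2>M/(L+M)$ (equivalently $L>M(1-\varepsilon)$, which follows from $L>M$), making $A$ the unique best response at $T_i=n$ and advancing the induction.

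The main obstacle is the bookkeeping of this staircase: at each step one must correctly identify the pair (selected/non-selected, $T_j$) the opponent falls into on the high-weight side of the two-point support, so that the already-established part of the induction supplies the required $\sigma_j=A$. Once this chain is set up, the uniform bound $1/(2-\varepsilon)>M/(L+M)$ propagates $A$ through every reachable information type, and the equilibrium in which both players play $A$ everywhere is the unique equilibrium compatible with $\sigma_1=A$ in state $a$.
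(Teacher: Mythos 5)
Your proposal is correct and follows essentially the same forward-induction route as the paper's proof in Appendix \ref{A0}: a base step at $T_i=0$ driven by the prior $p<\tfrac12$ and the order-$\varepsilon$ probability of being uninformed in state $b$, and an inductive step showing that the posterior probability that the opponent's count is one step behind (hence plays $A$ by the inductive hypothesis) is at least $1/(2-\varepsilon)>\tfrac12>M/(L+M)$, which the paper packages by averaging over who was selected first via the weight $\kappa_t$. The only cosmetic difference is that you carry the explicit best-response cutoff $M/(L+M)$, whereas the paper simply uses $z_t>\tfrac12$ together with $L>M$.
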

\begin{proof}See Appendix \ref{A0} for the proof by forward
induction.\end{proof}

Proposition \ref{rub} recalls the inductive equilibrium selection
mechanism that operates through higher-order beliefs: If player 1
plays always $A$ for $T_1=0$, then player 2 also plays $A$, and
this equilibrium procreates to games, where $T_i>0, i=1,2$. That
is, even though both players $T_i>0, i=1,2$ know that game $b$ was
selected, players still play $(A,A)$, despite the fact that $(B,B)$
would be payoff-dominant. However, as in \citet{Rub89}, there
exists a second equilibrium, where both players play $(B,B)$,
receiving a zero payoff both all the time. This equilibrium can be removed once
the $(B,B)$ payoff in game $G_a$ is negative, rather than 0, for
both players. In this case, there exists only one unique
equilibrium, where both players play $A$. Such a modification of
payoffs, which may be introduced throughout the paper, would bring
us closer to the formulation of \citet{Car93}, where there exist
unique strict equilibria for certain signal values.\footnote{See
\citet{Car93}, pp. 1008-1010, for a detailed discussion of the
relation between global games and mail games.}



\section{Information on the other player's information} \label{with_z_info}

Let us now add a secondary signal $Z_1$ and $Z_2$ as another
source of private information: player $i$ not only gets information $T_i$ but also observes
$$Z_i:=\left\{\begin{array}{ll}
T_j & \qquad\text{with probability } 1 - \psi\\
T_j + 1 & \qquad\text{with probability } \psi \, .
\end{array}\right.$$
The secondary signal $Z_1$ informs player 1 of the primary signal
$T_2$ that player 2 received. As such, the secondary signal
carries two types of information. First, it allows player 1 to
reason about the true fundamental of the game. That is, through
its dependence on $T_2$, $Z_1$ is correlated with nature's choice
of a fundamental. Second, and more importantly, $Z_1$ allows
player 1 to look \emph{directly} at $T_2$. In turn, this direct
look at $T_2$ informs him about the probability with which player
2 plays $A$ or $B$. In the following main propositions
\ref{gen_asym_equ}, \ref{p5} and \ref{p6}, we show that this
``direct look" at the other player's signal will induce asymmetric
equilibria, in which players weight their signals $Z$ and $T$
differently. That is, if player 1 conditions his actions mainly on
his primary signal $T_1$, then player 2 will have an incentive to
weight signal $Z_2$ heavily and vice versa. Put differently, the
signals $T_i,Z_j$ are complements, while the signals
$T_i,T_j$ are substitutes.

To underscore the significance of these asymmetric equilibria, we
proceed in three steps. First, we show that they exist. Second, we
describe the symmetric equilibria, where agents weight their
signals symmetrically. Third, we show that the asymmetric
equilibria welfare-dominate symmetric equilibria. Before we study
the asymmetric equilibria, we note that the \citet{Rub89}
equilibrium carries over to the environment where agents receive
primary and secondary signals.
\begin{prop} \label{gen_orig_equ}
When information $Z_1$ and $Z_2$ are available to players, there
exists an equilibrium in which both players play $A$ irrespective
of the information received.\end{prop}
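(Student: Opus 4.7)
The plan is to verify the equilibrium by a one-line best-response check: if one player plays $A$ unconditionally, then for the other player, action $A$ weakly dominates $B$ irrespective of beliefs, so the unconditional $A$-strategy is mutual best-response. This argument does not depend on the information structure at all, which is exactly why the $(A,A)$ equilibrium from Proposition \ref{rub} survives the introduction of $Z_1$ and $Z_2$.

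More concretely, I would fix player $j$'s strategy as ``play $A$ regardless of $(T_j,Z_j)$'' and compute player $i$'s conditional expected payoff given an arbitrary realization $(T_i,Z_i)$. Since $j$ plays $A$ with probability one, reading off the payoff matrices of $G_a$ and $G_b$ gives player $i$ a payoff of $M$ from $A$ in state $a$, a payoff of $0$ from $A$ in state $b$, and a payoff of $-L$ from $B$ in either state. Letting $q$ denote player $i$'s posterior probability of state $b$ given $(T_i,Z_i)$, the expected payoffs are $(1-q)M \ge 0$ for action $A$ and $-L<0$ for action $B$. Hence $A$ is a strict best response for every realization $(T_i,Z_i)$, and the symmetric profile ``both players always play $A$'' is an equilibrium.

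The only thing to be careful about is that the posterior $q$ is well-defined on the whole support of $(T_i,Z_i)$ under the joint distribution induced by nature, the messaging protocol, and the $\psi$-noisy secondary signal; but this is immediate from the explicit construction of $Z_i$ in terms of $T_j$. No further obstacle arises: the argument is exactly the degenerate case in which the opponent's action carries no informational dependence on the state, so secondary signals become payoff-irrelevant.
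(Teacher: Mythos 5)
Your proposal is correct and follows essentially the same best-response check as the paper: fix the opponent at ``always $A$'' and observe that $B$ yields $-L$ while $A$ yields a nonnegative payoff, regardless of $(T_i,Z_i)$. In fact you are slightly more careful than the paper's own wording, which asserts $A$ ``secures a payoff of $M$'' whereas in state $b$ the $(A,A)$ payoff is $0$; your computation $(1-q)M\ge 0 > -L$ is the precise version of the same argument.
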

\begin{proof}
Suppose player $1$ thinks that player $2$ plays $A$ for sure.
Irrespective of $(T_1,Z_1)$, the following holds: Choosing $B$
will yield a payoff $-L$, while taking action $A$ will secure him
a payoff of $M$. The same argument can be made for player $2$, and
thus we have established that the strategy profile $(A,A)$ is an
equilibrium.
\end{proof}
In this equilibrium, both players receive a zero payoff, even in those
situations where they know that playing $(B,B)$ would yield a higher
payoff. However, players can use their private signals to
coordinate on an alternative class of equilibria:

\begin{prop}\label{gen_asym_equ} If the secondary signals'
precision is sufficiently high ($\psi$ sufficiently small), there exist two asymmetric threshold equilibria for every $n\in\{1,2,3,\dots\}$. In one
equilibrium, player 1 plays $B$ if and only if $T_1\geq n+1$ and
$Z_1\geq n$; player 2 plays $B$ if and only if $Z_2\geq n+1$ and
$T_2 \geq n$. In the second equilibrium player $1$ plays $B$ if
and only if $Z_1\geq n+1$ and $T_2\geq n$ and player 2 plays $B$
if and only if $T_2\geq n+1$ and $Z_2\geq n$.
\end{prop}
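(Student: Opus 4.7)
The plan is to verify directly that the strategies in the first equilibrium are mutual best responses once $\psi$ is sufficiently small; the second equilibrium follows by swapping the roles of $T$ and $Z$ inside each player's information. The verification proceeds cell by cell in $(T_1,Z_1)$ for player 1 (and in $(T_2,Z_2)$ for player 2), comparing
\[
\Delta_1(T_1,Z_1)\;=\;M\,P(b,a_2{=}B\mid T_1,Z_1)\;-\;L\,P(a_2{=}A\mid T_1,Z_1)\;-\;M\,P(a,a_2{=}A\mid T_1,Z_1)
\]
against zero, and similarly for $\Delta_2(T_2,Z_2)$, with these formulae read off from the payoff tables of $G_a,G_b$.

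The computation is driven by two structural facts: the mail protocol forces $|T_1-T_2|\leq 1$ almost surely, and $Z_i=T_j+B_i$ with $B_i\sim\mathrm{Bernoulli}(\psi)$ independent of everything else. For player 1 I would split on $T_1$. On the cell $T_1\geq n+1$: $|T_1-T_2|\leq 1$ yields $T_2\geq n$ and $Z_2=T_1+B_2\geq n+1$, so player 2's thresholds both hold with probability one; since $T_1\geq 1$ also forces state $b$, $\Delta_1=M>0$, and the condition $Z_1\geq n$ in the stated strategy is automatic via $Z_1\geq T_2\geq n$. On $T_1=n$: $Z_2\geq n+1$ requires the $\psi$-event $B_2=1$, so $P(a_2{=}B\mid T_1,Z_1)\leq\psi$ and $\Delta_1\leq(M+L)\psi-L<0$ once $\psi<L/(L+M)$. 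On $T_1\leq n-1$: $Z_2\leq T_1+1\leq n$ deterministically, so $P(a_2{=}B)=0$ and $A$ is strictly optimal.

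The argument for player 2 is parallel. Every cell in her $B$-region already has $T_1\geq n+1$ deterministically, except for the boundary cells $(T_2,Z_2)\in\{(n,n{+}1),(n{+}1,n{+}1)\}$, in which $Z_2=n+1$ is reached only through $B_2=1$. There Bayes' rule with $Z_2=T_1+B_2$ gives
\[
P(T_1\geq n{+}1\mid T_2,Z_2{=}n{+}1)\;=\;\frac{(1-\psi)\,P(T_1{=}n{+}1\mid T_2)}{(1-\psi)\,P(T_1{=}n{+}1\mid T_2)+\psi\,P(T_1{=}n\mid T_2)}\;\longrightarrow\;1
\]
as $\psi\downarrow 0$, so $\Delta_2>0$ for $\psi$ small. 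In every cell outside the $B$-region, either $Z_2\leq n$ forces $T_1\leq Z_2\leq n$, or $T_2\leq n-1$ together with $Z_2\geq n+1$ forces $T_1\leq T_2+1\leq n$; in both cases player 1 plays $A$ surely and $A$ is strictly optimal for player 2.

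The main obstacle is quantitative control of the displayed posterior at the two boundary cells: it must exceed $L/(L+M)$, which requires checking that the mail-game probability $P(T_1{=}n{+}1\mid T_2)$ is strictly positive. Since that probability depends only on $\varepsilon$ and the $\tfrac12$ prior over who is initially informed — not on $\psi$ — a single $\psi^*(n,\varepsilon,L,M)$ can be chosen below which every strict inequality in the case analysis holds simultaneously, which completes the proof.
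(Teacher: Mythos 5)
Your proposal is correct and follows essentially the same route as the paper's own proof: a cell-by-cell best-response verification using $|T_1-T_2|\le 1$ and $Z_i\in\{T_j,T_j+1\}$, yielding the same threshold $L/(L+M)$ at $T_1=n$ and the same Bayes-posterior argument at the two boundary cells $(T_2,Z_2)\in\{(n,n+1),(n+1,n+1)\}$, where the paper just makes the bound explicit as $\psi_2=\frac{(1-\epsilon)M}{2L-(1-\epsilon)M}$. One small caveat: the second equilibrium is justified by the symmetry between players 1 and 2 (relabeling the players), not by a symmetry between $T$ and $Z$, which have different distributions — though the strategy profile your transformation produces is the correct one.
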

\begin{proof}
Let us consider the first equilibrium with cutoff $n$.
\begin{enumerate}
\item Take the behavior of player 2 as given. There are three cases to consider:
\begin{enumerate}
\item $T_1<n$: Player 1 is sure that $Z_2\leq n$ and hence plays
$A$. \item $T_1=n$: With probability $1-\psi$ ($\psi$) player 2's
information is $Z_2=n$ ($Z_2=n+1$). Playing $A$ secures a payoff
of zero for sure; playing $B$ yields an expected payoff larger
than $(1-\psi) (-L)+\psi M$, which is the first player's payoff
from $B$, when player 2 always plays $B$ given $Z_2>n$. Thus, for
$\psi\leq \frac{L}{L+M}=: \psi_1$ playing $A$ is optimal. \item
$T_1\geq n+1$:  Player 1 is sure that $Z_2 \geq n+1$ and $T_2\geq n$,
hence finds it optimal to play $B$.
\end{enumerate}
\item Equivalently, now take the behavior of player 1 as given.
\begin{enumerate}
\item $Z_2\leq n$: Player 2 knows that $T_1\leq n$, and thus plays $A$.
\item $Z_2 > n+1$: Player 2 knows that $T_1\geq n+1$, and thus plays $B$.
\item $Z_2 = n+1$: Here we have to take care of four subcases:
\begin{enumerate}
\item $T_2 = n-1$: Hence $T_1=n$ for sure and player 2 thus
chooses $A$.

\item $T_2 = n$: Defining $\lambda_\psi:=P(T_1\leq n | T_2 = n
\wedge Z_2=n+1)=\frac{\psi}{\psi+\frac{1-\epsilon}2 (1-\psi)}$,
the payoff for playing $B$ can be written as $\lambda_\psi (-L)+
(1-\lambda_\psi) M$. From this we obtain a boundary
$\psi_2:=\frac{(1-\epsilon)M}{2L-(1-\epsilon)M}>0$, which ensures
that for all $\psi\leq\psi_2$ playing $B$ is optimal for player 2.
That is, for $\psi\leq\psi_2$ the expected payoff of playing $B$
is non-negative.

\item $T_2 = n+1$: We can repeat the same argument using
$\mu_\psi:=P(T_1\leq n | T_2 = n+1 \wedge
Z_2=n+1)=\frac{\psi}{\psi+(1-\epsilon)(1-\psi)}$. It holds that
$\mu_\psi<\lambda_\psi$, such that for all $\psi\leq\psi_2$
playing $B$ is optimal for player 2. \item $T_2 = n+2$: Hence
$T_1=n+1$ for sure, and player 2 chooses $B$.
\end{enumerate}
\end{enumerate}
\end{enumerate}
Again, we can choose $\psi$ sufficiently small, i.e. $\psi \leq \min
\{\psi_1,\psi_2\}$, such that the strategy profile from the
proposition is indeed an equilibrium.

The second part of the proposition follows immediately from
changing the roles of player $1$ and $2$.
\end{proof}

To interpret the equilibria in Proposition \ref{gen_asym_equ} we
note that players weight primary and secondary signals
asymmetrically. That is, if player $1$ switches from playing $A$
to playing $B$ for signal pairs $T_1\geq n+1, Z_1\geq n$, then
player $2$ switches from $A$ to $B$ for signal values $T_2\geq n,
Z_2\geq n+1$. And, as the proof shows, signals where the trigger
strategy requires values greater or equal $n+1$ carry the main
information regarding the other player's signals and actions. On
the contrary, signals where the trigger strategy requires values
greater or equal $n$ carry little information on other player's
signals. More precisely, player $1$ relies in his inference about
the other player's action on the fact that $T_1\geq n+1$ informs
him of the fact that $T_2\geq n, Z_2\geq n+1$. Hence, player $1$
relies on his primary signal to infer the action of player $2$.
The main reason for player 1's reliance on his primary signal
$T_1$, is that player $2$ conditions his actions on $T_2\geq n,
Z_2\geq n+1$. That is, as the steps 2.(c)$i-iv$ in the proof
show, player 2 relies on his secondary signal to infer the action
of player 1. In turn, player 1's reliance on the secondary signal
1 justifies player 2's reliance on the primary signal... This
complementarity between player 1's primary and player 2's
secondary signal ensures that asymmetric weighting of signals is
an equilibrium. Put differently, players face a coordination game
in the weighting of their private signals; players can
\emph{choose their cutoff values $T_i,Z_i$ in a way that makes it
easy for their counterpart to assess whether their requirement for
playing $B$ is met or not}. In the present case, this means
leaning on the primary signal once the opponent leans on the
secondary signal and vice versa.

The main purpose of the following propositions
\ref{gen_sym_equ}-\ref{p6} is to emphasize this point further.
First, we show that there also exist symmetric equilibria, where
agents weight their signals equally. Moreover, we show that not
every configuration of cutoffs is an equilibrium. Second, we show
in proposition \ref{p5} that the coordination game in the
evaluation of private signals ensures multiple equilibria once
private signals are of high quality. Finally, to bring-out the
fact that the precision with which agents can anticipate each
others actions is endogenous, we show in proposition \ref{p6} that
the asymmetric equilibria, where agents exploit the
complementarity between primary and secondary signals, welfare
dominate the symmetric equilibria of proposition
\ref{gen_sym_equ}.

\begin{prop}\label{gen_sym_equ} If the secondary signals'
precision is sufficiently high, there exist symmetric monotone equilibria for every $n\in\{1,2,3,\dots\}$, where players weight their signals equally
such that both players play $B$ if and only if $T_i \geq n+1 $ and $Z_i
\geq n+1$. There exist no symmetric monotone equilibria, where both
players play $B$ if and only if $T_i \geq n+1 $ and $Z_i \geq n+2$ (or
$T_i \geq n+2 $ and $Z_i \geq n+1$).
\end{prop}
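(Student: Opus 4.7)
The plan is to follow the case-analysis template used in the proof of Proposition \ref{gen_asym_equ}, applied now to candidate symmetric strategy profiles. By the symmetry of the candidate profiles across the two players, it suffices throughout to verify incentive compatibility from player 1's perspective. For the existence claim, fix $n\geq 1$ and suppose player 2 plays $B$ iff $T_2\geq n+1$ and $Z_2\geq n+1$. I would walk through the realizations $(T_1,Z_1)$. If either coordinate is at most $n-1$, the joint support ($|T_1-T_2|\leq 1$ together with $Z_1\in\{T_2,T_2+1\}$) forces $T_2\leq n$, so player 2 plays $A$ with certainty and $A$ is strictly optimal. The critical corner is $(T_1,Z_1)=(n+1,n+1)$, where $T_2\in\{n,n+1\}$ and a Bayes update using $P(Z_1=n+1\mid T_2=n)=\psi$ against $P(Z_1=n+1\mid T_2=n+1)=1-\psi$ gives
\[
P(T_2=n\mid T_1=n+1,\ Z_1=n+1)\;=\;\frac{\psi}{\psi+2(1-\epsilon)(1-\psi)}\;\longrightarrow\;0\quad\text{as }\psi\to 0.
\]
Conditional on $T_2=n+1$, player 2's secondary signal satisfies $Z_2\in\{n+1,n+2\}$ and hence $Z_2\geq n+1$ surely, so player 2 plays $B$ almost surely for small $\psi$ and the expected payoff from $B$ converges to $M>0$, justifying $B$. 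The remaining boundary pairs $(n,n+1)$ and $(n+1,n)$ either force $T_2\leq n$ with certainty or permit $T_2=n+1$ only together with a $\psi$-event on player 2's side, so $A$ dominates; interior pairs $(T_1,Z_1)$ with both coordinates at least $n+2$ only sharpen the inference that player 2 plays $B$. Choosing $\psi$ below the minimum threshold across these cases closes the existence argument.

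For the non-existence part, I would exhibit explicit signal realizations at which each candidate cutoff fails to be a best response. For the cutoff $(T_i\geq n+1,\, Z_i\geq n+2)$, consider $(T_1,Z_1)=(n+1,n+2)$: the candidate prescribes $B$, but the signal structure forces $T_2\in\{n+1,n+2\}$, and in either case $Z_2\in\{T_1,T_1+1\}=\{n+1,n+2\}$, so player 2 plays $B$ only when $Z_2=n+2$, an event of probability exactly $\psi$. Player 1's expected payoff from $B$ is therefore $\psi M-(1-\psi)L$, strictly negative for small $\psi$, so $A$ strictly dominates $B$, contradicting the prescription. For the cutoff $(T_i\geq n+2,\, Z_i\geq n+1)$, consider $(T_1,Z_1)=(n+2,n+1)$: intersecting $T_2\in\{n+1,n+2,n+3\}$ (from $T_1=n+2$) with $T_2\in\{n,n+1\}$ (from $Z_1=n+1$) forces $T_2=n+1<n+2$, so player 2 plays $A$ with certainty and $B$ yields $-L$, again contradicting the prescription $B$.

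The main obstacle is the Bayesian update at the corner $(n+1,n+1)$ of the existence argument. A naive intuition could suggest that $T_1=n+1$ alone leaves too much residual uncertainty about $T_2$ to commit to $B$, but the key observation is that, for small $\psi$, the event $\{Z_1=n+1\}$ is overwhelmingly generated by $T_2=n+1$ rather than $T_2=n$, which effectively pins down player 2's action. Once this inversion of posterior weights is recognized, the remaining inequalities reduce to computations directly parallel to those in Proposition \ref{gen_asym_equ}.
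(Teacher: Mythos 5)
Your proof is correct and follows essentially the same route as the paper's: a case analysis on the realizations $(T_1,Z_1)$ in which the only non-degenerate Bayesian update occurs at the corner $(n+1,n+1)$, plus explicit counterexample realizations for the excluded cutoff profiles (you treat both variants, whereas the paper writes out only the first). Your posterior odds at the corner carry the factor $2(1-\epsilon)$ coming from the random choice of the initially informed player, which is consistent with the paper's own $\lambda_\psi$ in Proposition 2 and arguably more careful than the ratio $(1-\epsilon)$ appearing in the paper's Appendix A1; this only shifts the admissible bound on $\psi$ by a constant and does not affect the conclusion.
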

\begin{proof}See Appendix \ref{A1}.\end{proof}

Up to now we have shown that our main results hold for a given
probability $\epsilon$ that a message between the players is lost.
One might suspect\footnote{In the public and private information
frameworks of \citet{Hel02}, \citet{Mor04}, and \citet{Ang06},
multiple equilibria emerge once public signals or priors are
sufficiently precise \emph{relative} to private signals.} that
equilibrium multiplicity depends on the relative precision of
primary and secondary signals, i.e., a high $\epsilon/\psi$ ratio
may be required. The following proposition shows that this is not
the case:

\begin{prop}
There exist upper bounds $\bar{\epsilon}>0$ and $\bar{\psi}>0$, such
that the equilibria described in propositions \ref{gen_asym_equ}
and \ref{gen_sym_equ} exist for all combinations of  $\epsilon\leq\bar{\epsilon}$ and
$\psi\leq\bar{\psi}$. \label{p5}\end{prop}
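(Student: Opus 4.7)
The plan is to revisit the incentive constraints derived in the proofs of Propositions~\ref{gen_asym_equ} and~\ref{gen_sym_equ}, show that each takes the form $\psi\leq f(\epsilon)$ for a continuous function $f$ that is strictly positive at $\epsilon=0$, and then extract uniform bounds by continuity. The conceptual content is that multiplicity does not hinge on a high ratio $\epsilon/\psi$, but merely on being in a neighborhood of the ``clean'' limit where messages are not lost and the secondary signal is perfectly informative.

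For the asymmetric equilibria, the proof of Proposition~\ref{gen_asym_equ} already produces two explicit bounds: $\psi_1=L/(L+M)$, which does not depend on $\epsilon$, and $\psi_2(\epsilon)=(1-\epsilon)M/(2L-(1-\epsilon)M)$, which is strictly decreasing on $[0,1)$ with $\psi_2(0)=M/(2L-M)>0$. Hence for any chosen $\bar{\epsilon}<1$ and every $\epsilon\leq\bar{\epsilon}$ the inequality $\psi_2(\epsilon)\geq\psi_2(\bar{\epsilon})>0$ is automatic. Taking $\bar{\psi}:=\min\{\psi_1,\psi_2(\bar{\epsilon})\}$ then secures both asymmetric threshold equilibria uniformly on the rectangle $[0,\bar{\epsilon}]\times[0,\bar{\psi}]$, with no constraint tying $\psi$ tightly to $\epsilon$.

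For the symmetric equilibria I would enumerate the incentive constraints derived in Appendix~\ref{A1} and rewrite each as a rational inequality in $\epsilon$ and $\psi$. Because $\psi=0$ makes the secondary signal perfectly revealing of the opponent's primary signal, and $\epsilon=0$ means that no message is ever lost, each constraint on its relevant branch reduces in the joint limit $(\epsilon,\psi)\to(0,0)$ to a strict payoff comparison (either $M>0$ for the $B$-branch or $-L<0$ for the deviation branch). Continuity of rational functions then provides an open neighborhood of the origin on which every constraint continues to hold; intersecting this neighborhood with the rectangle from the asymmetric analysis yields bounds $\bar{\epsilon}$ and $\bar{\psi}$ that serve both propositions simultaneously.

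The main obstacle will be to verify that every one of the incentive inequalities arising in Appendix~\ref{A1} has \emph{strict} slack in the joint limit $\epsilon=\psi=0$, rather than collapsing to an equality there; only then does the continuity argument give a positive-measure rectangle. This is essentially a careful case-by-case enumeration that mirrors the four subcase structure seen already in the proof of Proposition~\ref{gen_asym_equ}, and does not require any new conceptual ingredient beyond the continuity observation above.
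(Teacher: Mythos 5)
Your proposal is correct and follows essentially the same route as the paper: collect the finitely many threshold inequalities $\psi\leq\psi_1$, $\psi\leq\psi_2(\epsilon)$ (and, for the symmetric case, $\psi\leq\psi_3(\epsilon)=\frac{(1-\epsilon)M}{L+(1-\epsilon)M}$ from Appendix \ref{A1}), observe that each threshold stays strictly positive near $\epsilon=0$, and take minima to obtain a uniform rectangle. Your use of the monotonicity of $\psi_2$ in $\epsilon$ is a mild sharpening of the paper's limit argument, and for the symmetric part you need not re-derive the constraints from scratch --- the single bound $\psi_3$ is already explicit in Appendix \ref{A1} and has the positive limit $M/(L+M)$.
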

\begin{proof} For propositions \ref{gen_asym_equ}
and \ref{gen_sym_equ} to hold, we need a sufficiently small error
probability for the secondary signal, i.e.,
$\psi\leq\min[\psi_1,\psi_2,\psi_3]$, where
$\psi_1=\frac{L}{L+M}$,
$\psi_2=\frac{(1-\epsilon)M}{2L-(1-\epsilon)M}$, and $\psi_3=
\frac{(1-\epsilon)M}{L+(1-\epsilon)M}$. It therefore suffices for
show that the limits of $\psi_1,\psi_2,\psi_3$ for $\epsilon
\rightarrow 0$ are positive: First, observe that $\psi_1$ is positive and does not depend on $\epsilon$. Second, $\lim\limits_{\epsilon\rightarrow 0}\psi_2 = \frac{M}{2L-M} > 0$. Finally, $\lim\limits_{\epsilon\rightarrow 0}\psi_3 = \frac{M}{L+M} > 0$.
\end{proof}
This demonstrates, that our results differ fundamentally from
those obtained by \citet{Rub89}, \citet{Car93}, \citet{Fra04}, and
\citet{Mor07}, where equilibrium selection works best once private
information is very precise. The present analysis, in particular
Proposition \ref{gen_asym_equ}, shows that once agents receive
different types of information there emerges a coordination game
in the evaluation of this information. And this incentive to
coordinate is particularly strong once private signals are very
informative. 

In the introduction, we argued that the asymmetric equilibria
derived here deserve special scrutiny. We believe that the main
reason for this lies in the following

\begin{prop} If the secondary private signals are very precise, the asymmetric equilibria described in Proposition
\ref{gen_asym_equ} for $n=1$ welfare-dominate those where $n>1$.
And, more importantly, the asymmetric equilibria of Proposition
\ref{gen_asym_equ} welfare-dominate the symmetric ones of
Proposition \ref{gen_sym_equ} for every given cutoff $n$.
\label{p6}\end{prop}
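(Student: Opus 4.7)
The plan is to decompose ex-ante welfare into contributions from the two states of nature and then to exploit a $\psi \to 0$ limit argument, under which the secondary signals reveal the opponent's primary signal perfectly and the welfare comparison collapses to a direct set-containment on events over $(T_1, T_2)$.

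First I would observe that in state $a$ the primary signals satisfy $T_1 = T_2 = 0$ deterministically, so the triggers $T_i \geq n+1$ are never met for any $n \geq 1$. Hence both players play $A$ in state $a$ under every equilibrium considered in Propositions \ref{gen_asym_equ} and \ref{gen_sym_equ}, contributing the identical term $(1-p)\cdot 2M$ to ex-ante welfare. All welfare differences therefore live in state $b$, where the per-state joint payoff can be written as $2M\,\pi_{BB}(b) - L\,\pi_{\text{miss}}(b)$, with $\pi_{BB}(b)$ and $\pi_{\text{miss}}(b)$ the conditional probabilities of coordinating on $B$ and of miscoordinating.

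Next I would pass to the $\psi \to 0$ limit, where $Z_i = T_j$ with probability one and each trigger rule becomes a function of $(T_1, T_2)$ alone. For the asymmetric equilibrium with cutoff $n$ the two players' triggers then coincide on the single event $E^{\text{asym}}_n = \{T_1 \geq n+1,\; T_2 \geq n\}$, so miscoordination is impossible and $\pi_{BB}(b) = P(E^{\text{asym}}_n \mid b)$; for the symmetric equilibrium both triggers collapse to $E^{\text{sym}}_n = \{T_1 \geq n+1,\; T_2 \geq n+1\}$, again with no miscoordination. The strict set-inclusions $E^{\text{sym}}_n \subsetneq E^{\text{asym}}_n$ (which differ on $\{T_1 \geq n+1,\; T_2 = n\}$) and $E^{\text{asym}}_n \subsetneq E^{\text{asym}}_1$ for $n>1$ each carry positive conditional mass thanks to the geometric-type distribution of $(T_1, T_2)$ generated by the messaging protocol, so at $\psi = 0$ one obtains the strict orderings $W^{\text{asym}}_1 > W^{\text{asym}}_n > W^{\text{sym}}_n$.

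Finally, since $\pi_{BB}(b)$ and $\pi_{\text{miss}}(b)$ are polynomial in $\psi$ and $\varepsilon$ and since $\pi_{\text{miss}}(b) = O(\psi)$ as $\psi \to 0$, welfare is continuous in $\psi$ and the strict inequalities above persist on a right neighborhood of $\psi = 0$, which is exactly the ``very precise'' regime in the hypothesis. The main obstacle will be the bookkeeping required to confirm that the equilibrium strategies really collapse onto the stated events at $\psi = 0$ (so that $\pi_{\text{miss}}$ vanishes in the limit) and that the set-differences carry strictly positive mass in state $b$; once these two points are in place, the continuity upgrade to small positive $\psi$ is routine.
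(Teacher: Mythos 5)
Your proposal is correct, and it reaches the conclusion by a genuinely different route than the paper. The paper computes explicit welfare-loss formulas for arbitrary (small) $\psi$: it tabulates the probabilities of wrong coordination $(A,A)$ and of miscoordination in state $b$, obtains closed forms $l_n$ (asymmetric) and $\tilde{l}_n$ (symmetric), shows the bracketed coefficient of $(1-\epsilon)^{2n-1}$ in $l_n$ is negative for small $\psi$ (so $l_n$ increases in $n$), and shows $l_n-\tilde{l}_n\to -p(1-\epsilon)^{2n}\epsilon M<0$ as $\psi\to 0$. You instead take the $\psi\to 0$ limit first, under which both players' triggers collapse to events in $(T_1,T_2)$ alone, miscoordination vanishes, and the comparison reduces to the strict nestings $E^{\text{sym}}_n\subsetneq E^{\text{asym}}_n\subsetneq E^{\text{asym}}_1$, each difference carrying positive mass (e.g.\ $\{T_1=n+1,\,T_2=n\}$ has probability of order $(1-\epsilon)^{2n}\epsilon$ in state $b$, matching the paper's limiting gap); continuity in $\psi$ then upgrades this to small positive $\psi$. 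Your version buys conceptual transparency — it makes clear \emph{why} asymmetric weighting helps (the $B$-coordination region is strictly larger at no miscoordination cost in the limit) — at the price of hiding the explicit loss expressions. The one place where you should add a line of care is the claim that $n=1$ dominates \emph{all} $n>1$: a naive continuity argument yields a threshold $\psi_n$ for each comparison, and you need a threshold uniform in $n$. This is easily supplied (the perturbation of the signal distribution is $O(\psi)$ uniformly in $n$ and payoffs are bounded, so $|W_n(\psi)-W_n(0)|\le C\psi$ with $C$ independent of $n$, while $\sup_{n\ge 2}W^{\text{asym}}_n(0)=W^{\text{asym}}_2(0)<W^{\text{asym}}_1(0)$), and the paper gets it for free because its bracketed coefficient does not depend on $n$; but as written your argument leaves it implicit.
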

\begin{proof}See Appendix \ref{A2}.\end{proof}

Proposition \ref{p6} emphasizes the complementarity between
primary and secondary signal that gave rise to the asymmetric
equilibria of proposition \ref{gen_asym_equ}: Private signals of
high quality induce  for players to coordinate on an equilibrium,
where they exploit the complementarity between primary and
secondary signals to reduce losses that occur in cases where
nature selects game $b$, but players play $(A,A)$, or, worse, $(A,B)$.

\section{Discussion}\label{d}


The players of
\citet{Rub89} and \citet{Car93} rely on a very particular type of
information. Players can only reason \emph{indirectly} about each
other's trembling behavior because the private information on the
game's unknown coefficient, is correlated. That is, players purely
rely on the knowledge that the other player is looking at the same
game. They cannot make \emph{direct} inference on what the other
player thinks of the game. In reality, we argue that there are
many environments where players know more than that. In the
coordinated attack interpretation of the mail game, players may
observe directly that their messenger is off to a ``bad start", in
which case it is unlikely that he will deliver his message.
Similarly, one general may see that someone is leaving the other
general's camp, which leads him to believe that the other general
is trying to send him a message. If, in turn, no message arrives,
it is likely that game $b$ was selected, but that the primary
message was lost. The analysis of such an information structure
that comprises ``primary signals", as in the \citet{Rub89} game
and ``secondary signals", which inform players of the chance that
their primary messages were received, indicates that more general
signal structures can reverse the intended equilibrium selection
effect: multiple equilibria are ensured once private information
is very precise.

In the present example, we find that players face a coordination
game regarding the way that they use their information to reason
about each other's actions. Put differently, the extent to which
player $i$ trembles from the perspective of player $j$ depends on
the way that $i$ weights his information and vice versa.
Accordingly, signals with heterogeneous informational content
induce a coordination game with regard to the way that agents
evaluate their information. In the present specification, there is
a complementarity between player $i$'s primary signal and player
$j$'s secondary signal: if $i$ leans on his primary signal, then
$j$ can forecast $i$'s tremble best through the secondary signal
and vice versa. As we show, this complementarity gives rise to two
classes of equilibria. Moreover, within each equilibrium class,
there exists a countably infinite number of equilibria if the
private signal's precisions are large. That is, while precise
private signals ensure uniqueness in \citet{Rub89} and
\citet{Car93}, they ensure multiplicity in the present class of
games. Finally, the asymmetric equilibria that arise under the
present extended signal structure cannot be dismissed on
efficiency grounds. In the current model asymmetric equilibria
dominate symmetric ones in terms of efficiency.

The current analysis is confined to the simple \citet{Rub89} game
with its discrete information structure. This restriction allows
us to give a constructive proof for the existence of asymmetric
equilibria and it allows us to make a welfare comparison showing
that asymmetric equilibria welfare-dominate their symmetric
counterparts. A generalization to the continuous ``global games"
structure is therefore left for future research.


\newpage
\begin{appendix}


\addcontentsline{toc}{section}{References}
\markboth{References}{References}
\bibliographystyle{apalike}
\bibliography{References}


\section{Proof of Proposition \ref{sym_game}}\label{A0} The proof is
parallel to the one in \citet{Rub89}. First, we establish that
player $i$ plays $A$ when $T_i=0$. Player $i$ considers two
possible scenarios:
\begin{enumerate}
\item With probability $(1-p)$, game $G_a$ is played.

\item With probability $\frac12 p \epsilon$, Player $j$ was
selected, game $G_b$ is played, and the message from player $j$ to
player $i$ was lost.
\end{enumerate}
Hence, we find a lower bound $\tilde{A}$ for i's payoff from
playing $A$ and an upper bound $\tilde{B}$ for i's payoff from
playing $B$:
$$\pi(A) \geq  \frac{(1-p)M + \frac12 p \epsilon 0}{(1-p) + \frac12 p \epsilon} =: \tilde{A}\qquad \qquad \pi(B)  \leq  \frac{-(1-p)L + \frac12 p \epsilon M}{(1-p) + \frac12 p \epsilon} =: \tilde{B}$$
It holds that $\tilde{A}>\tilde{B}$, and thus player $i$ plays
$A$. The induction step from $t-1$ to $t$ is identical to the
original Rubinstein one: assume that both players play $A$ when
they receive a $T_i<t$. Consider that player $i$ gets information
$T_i=t$. For the following argument we denote the probability that
player $i$ was informed first that game $G_b$ is played by
$\kappa_t\in[0,1]$. The posterior probability of player $j$ having
received information $T_j=t-1$ is given by
$$z_t:= \frac{\kappa_t \epsilon + 1- \kappa_t}{\kappa_t (\epsilon+(1-\epsilon)\epsilon) + 1- \kappa_t}>\frac12 \, .$$
In other words, the posterior probability of player $j$ playing
$A$ is larger than $\frac12$, and thus playing $A$ is optimal for
player $i$ as well: Playing $A$ yields 0, while playing $B$ has
expected payoff $z_t(-L)+(1-z_t)M < 0$.


\section{Proof of Proposition \ref{gen_sym_equ}}\label{A1}
We start by proving the first statement. Without loss of
generality, we have to check only if player $i$'s best response to
player $j$'s equilibrium strategy is consistent with player $i$'s
equilibrium strategy. We have to check the following cases of
information that player $i$ might receive:
\begin{enumerate}
\item $Z_i\leq n$: Player $i$ knows that $T_j\leq Z_i \leq n$ and
that $j$ player A, hence plays $A$.
\item $Z_i=n+1$:
\begin{enumerate}
\item $T_i=n$: player $i$ knows that $Z_j=n$ and that player $j$
plays $A$, which leads $j$ to play $A$ as well.
\item $T_i=n+1$: Clearly $Z_j \geq n+1 $ and $T_j\in \{n,n+1\}$. To
determine the payoff of playing $B$, the conditional distribution
of $T_j$ has to be taken into account:
{\footnotesize$$
\frac{(1-\epsilon)^{2n}\epsilon\psi}{(1-\epsilon)^{2n}\epsilon\psi+(1-\epsilon)^{2n+1}\epsilon
(1-\psi)}(-L) + \frac{(1-\epsilon)^{2n+1}\epsilon
(1-\psi)}{(1-\epsilon)^{2n}\epsilon\psi+(1-\epsilon)^{2n+1}\epsilon
(1-\psi)} M \,.$$}
Hence playing $B$ is optimal if $\psi < \psi_3 := \frac{(1-\epsilon)M}{L+(1-\epsilon)M}$.
\item $T_i>n+1$: thus $Z_j\geq n+2$ and $T_j\geq n+2$. Player $j$ plays
$B$ and the optimal response of $i$ is $B$.
\end{enumerate}
\item $Z_i=n+2$: Player $i$ knows that $T_j \geq n+1$.
\begin{enumerate}
\item $Z_i=n+2$ implies that $T_i < n$ is not feasible.

\item $T_i=n$: $Z_j \geq n+1$ with probability $\psi$, hence the payoff
of playing $B$ is $(1-\psi)(-L)+\psi M$, which is negative for
$\psi<\psi_1$, the case when playing $A$ is optimal for player
$i$.

\item $T_i > n$: Therefore $Z_j \geq n+1$ for sure, and thus both
players play $B$.
\end{enumerate}
\item $Z_i>n+2$: Player $i$ knows that $T_j\geq Z_i-1>n+1$. It
also holds that $T_i \geq n+1$ and thus $Z_j \geq n+1$. Therefore player $j$
plays $B$ and player $i$'s best response is to play $B$ as well.
\end{enumerate}
Hence we have established the first part of the proposition for
$\psi \leq \min\{\psi_1,\psi_3\}$.

To prove the second part of the proposition we provide a counter
example: Suppose $Z_1=n+1$ and $T_1=n+2$: player $1$ now plays
$A$. This is not a best response since player $1$ knows
that $T_2 \geq n+1$ and $Z_2\geq n+2$, and thus that $2$ plays $B$ with
certainty. Hence, equilibria where players play $B$ iff $T_i\geq n+1$
and $Z_i\geq n+2$ cannot exist.


\section{Proof of Proposition \ref{p6}}\label{A2}
We will compute the total welfare loss in the asymmetric
equilibria of Proposition \ref{gen_asym_equ} given $n$ (sum of
expected surplus losses of player 1 and 2) compared to
hypothetical perfect coordination between both players. Note that
in state $a$ neither miscoordination nor coordination on the wrong
action can occur. In state $b$
\begin{enumerate}
\item coordination on the wrong action $(A,A)$ happens with
probability
$$p\, \left[1-(1-\epsilon)^{2(n-1)}+(1-\epsilon)^{2(n-1)}\epsilon(1+(1-\psi)(1-\epsilon))+\frac12(1-\epsilon)^{2n}(1-\psi)\right]$$
\item miscoordination $(B,A)$ happens when $T_1=n$, $T_2\geq n$, and
$Z_2=n+1$. The associated probability is $ p(1-\epsilon)^{2n-1} \epsilon \psi$.
\end{enumerate}
Using this, we can compute the welfare loss in equilibrium:
{\footnotesize\begin{multline*}
l_n:= p\, \Bigg\{(1-\epsilon)^{2n-1}\epsilon\psi (2M+L) \\+  \left[1-(1-\epsilon)^{2(n-1)}+(1-\epsilon)^{2(n-1)}\epsilon(1+(1-\psi)(1-\epsilon))+\frac12(1-\epsilon)^{2n}(1-\psi)\right] 2M \Bigg\} \\
= p(1-\epsilon)^{2n-1}\, \Bigg\{\epsilon\psi (2M+L) +
\left[-1+\epsilon(1+(1-\psi)(1-\epsilon))+\frac12(1-\epsilon)^2(1-\psi)\right]
2M \Bigg\} + p2M\, .
\end{multline*}}
It is straightforward to see that the expression in curly brackets
is negative for small $\psi$. Hence, $l_n$ is increasing in $n$
for small $\psi$.

The remaining part of the proof requires computing the welfare
loss in the symmetric equilibria of Proposition \ref{gen_asym_equ}
given $n$. Again there are two types of losses:
\begin{enumerate}
    \item coordination on the wrong action $(A,A)$ happens with probability
    $$p\, \left[1-(1-\epsilon)^{2n}+(1-\epsilon)^{2n}\epsilon(1-\psi)\right]
    $$
    \item miscoordination $(B,A)$ happens once $T_1=n$, $T_2\geq n$, and
    $Z_2=n+1$. Hence it happens with probability $ p(1-\epsilon)^{2(n-1)} \epsilon \psi$.
\end{enumerate}
Using these probabilities, we compute the welfare loss in
equilibrium:
$$\tilde{l}_n = p (1-\epsilon)^{2(n-1)} \left\{ \epsilon \psi (2M+L) + (1-\epsilon)^2\left[-1+\epsilon(1-\psi)\right]2M\right\} +p2M$$
Note that $l_n-\tilde{l}_n \xrightarrow[\psi \to 0]{} -p (1-\epsilon)^{2n} \epsilon M < 0$.
Hence, for a small enough $\psi$ it holds that $l_n <
\tilde{l}_n$, i.e., welfare is higher in the asymmetric equilibria
given $n$.


\section{Referee appendix}\label{ref}

In this appendix we show that our main result, multiplicity of
equilibria in the presence of primary and secondary signals, holds
for the original asymmetric version of the electronic mail game of
\cite{Rub89}. That is, we now assume that it is always player 1
who gets informed first, i.e., gets a message in case nature draws
game $b$. Equivalently, we set the probability $P$, with which
player 1 is informed first to $P=1$ (rather than $1/2$, which is
what we assumed in the main text). Other than that leave the
signals $Z,T$ unchanged. Our only deviation from \cite{Rub89} is
therefore the introduction of the secondary signal $Z$. Naturally, Proposition \ref{gen_orig_equ} holds without modification of the proof.

We now show for this simplified setting that multiple equilibria
exist as in the main text. In particular, we prove that the
asymmetric equilibria described in Proposition \ref{gen_asym_equ}
still exist:

\begin{prop}\label{orig_asym_equ} There exists an asymmetric threshold
equilibria for every $n\in\{1,2,3,\dots\}$ for small enough $\psi$:
player 1 plays $B$ if and only if $T_1\geq n+1$ and player 2 plays $B$
if and only if $Z_2\geq n+1$ and $T_2 \geq n$.
\end{prop}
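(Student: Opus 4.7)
The plan is to mirror the case-by-case verification used for Proposition \ref{gen_asym_equ}, exploiting the simplification that under the original Rubinstein protocol ($P=1$) player 1 is always the first informed, so the primary signals are deterministically related by $T_2 \in \{T_1, T_1+1\}$. This drops the branch of player 2's posterior, present in the symmetric version, in which player 2 would have been the original sender; consequently, the $\tfrac{1}{2}$-weights that appear in the symmetric calculation disappear, and the only thresholds that need to be re-derived are those governing the critical posteriors.

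First, take player 2's strategy as given and verify player 1's best response by splitting on $T_1$. The extremes are immediate from the signal structure: $T_1<n$ forces $Z_2\le T_1+1\le n$ so player 2 plays $A$, and $T_1\ge n+1$ forces $T_2\ge T_1\ge n$ together with $Z_2\ge T_1\ge n+1$ so player 2 plays $B$. The delicate case is $T_1=n$: player 1 knows state $b$, and by construction of $Z$ the event that triggers $B$ for player 2 has conditional probability $\psi$. The expected payoff of $B$ is then bounded above by $\psi M+(1-\psi)(-L)$, which is negative as soon as $\psi<L/(L+M)=:\psi_1$.

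Second, take player 1's strategy as given and verify player 2's best response by splitting on $Z_2$. For $Z_2\le n$ we have $T_1\le Z_2\le n$ and player 1 plays $A$; for $Z_2\ge n+2$ we have $T_1\ge Z_2-1\ge n+1$ and player 1 plays $B$. The core case is $Z_2=n+1$, in which player 2 only learns $T_1\in\{n,n+1\}$; here I would sort further on $T_2$. The subcase $T_2\ge n+2$ forces $T_1=n+1$ and $B$ is clearly optimal. The critical subcase is $T_2=n+1$: by direct Bayes' rule, using the deterministic relation $T_2\in\{T_1,T_1+1\}$ and the two values of $m$ compatible with $T_2=n+1$, the posterior weight on $T_1=n$ is $\mu_\psi=\psi/[\psi+(1-\varepsilon)(1-\psi)]$, which vanishes with $\psi$. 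The expected payoff of $B$ is then $M-(M+L)\mu_\psi$, positive whenever $\psi<(1-\varepsilon)M/[L+(1-\varepsilon)M]=:\psi_3$.

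Choosing $\psi$ below the minimum of $\psi_1$ and $\psi_3$ will complete the proof. The main obstacle is the bookkeeping in the subcase $(Z_2,T_2)=(n+1,n+1)$: because the conditional distribution of $T_1$ no longer carries the $\tfrac{1}{2}$-factor from uncertainty over who was the original sender, the threshold $\psi_3$ must be recomputed from scratch, and one has to check (as in Proposition \ref{p5}) that it remains strictly positive in the $\varepsilon\to 0$ limit so that small $\psi$ can always be found.
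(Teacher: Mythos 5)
Your overall architecture (case-by-case best responses, with the extreme cases settled by the deterministic support restrictions and the interior cases by Bayes' rule) matches the paper's. But you have the deterministic relation between the primary signals backwards, and this is not a harmless labeling choice. In the referee-appendix setup player 1 ``gets a message in case nature draws game $b$,'' so player 1 is always weakly ahead: $T_1 \in \{T_2, T_2+1\}$, not $T_2 \in \{T_1, T_1+1\}$ as you assume. Under your relation $T_2 \geq T_1$, the clause ``$T_2 \geq n$'' in player 2's strategy would already be implied by $Z_2 \geq n+1$ and hence vacuous; worse, the profile would not be an equilibrium at all, because at $(T_2, Z_2) = (n, n+1)$ your relation forces $T_1 = n$ with certainty, player 1 then plays $A$, and player 2's prescribed action $B$ yields $-L < 0$.

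This misidentification leads you to analyze the wrong subcase and to omit the one that actually carries the proof. At $Z_2 = n+1$ you treat $T_2 = n+1$ as the critical subcase and compute $\mu_\psi$; but under the correct relation $T_2 = n+1$ together with $Z_2 = n+1$ forces $T_1 = n+1$, so that subcase is trivial. The genuinely critical subcase is $(T_2, Z_2) = (n, n+1)$ --- which you never address even though player 2 is prescribed to play $B$ there --- where $T_1 \in \{n, n+1\}$ and Bayes' rule puts posterior weight $\psi/\bigl[\psi + (1-\epsilon)(1-\psi)\bigr]$ on $T_1 = n$ (the paper writes $\psi$, dropping a factor $(1-\epsilon)$ in the prior odds, but either way the weight vanishes as $\psi \to 0$), so $B$ is optimal for $\psi$ below a strictly positive threshold. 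Your verification of player 1's best response survives under either convention, but the second half of your argument must be redone with $T_1 \in \{T_2, T_2+1\}$ and with the subcase $T_2 = n$ included, since that is where the binding bound on $\psi$ comes from.
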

\begin{proof}
The proof is mostly unchanged compared to the proof of Proposition \ref{gen_asym_equ}. There are two exceptions:
\begin{enumerate}
\setcounter{enumi}{1}
\item Equivalently, now take the behavior of player 1 as given.
\begin{enumerate}
\setcounter{enumii}{2}
\item $Z_2 = n+1$: Here we have to take care of four subcases:
\begin{enumerate}
\setcounter{enumiii}{1}
\item $T_2 = n$: Note that $P(T_1 = n | T_2 = n \wedge Z_2=n+1) = \psi$, and thus, the payoff of playing $B$ is given by $\psi (-L) + (1-\psi) M$. From this we can determine $\bar{\psi}_2:=\frac{M}{L+M}>0$ such that
for all $\psi\leq\bar{\psi}_2$ playing $B$ is optimal for player 2, i.e. where the expected payoff of playing $B$ is non-negative.
\item $T_2 > n$: Hence $T_1\geq n+1$ for sure, player 2 chooses $B$.
\end{enumerate}
\end{enumerate}
\end{enumerate}
Again, we can choose a small enough $\psi$, i.e., $\psi \leq \min
\{\psi_1,\bar{\psi}_2\}$, such that the strategy profile from the
proposition is indeed an equilibrium.
\end{proof}


\end{appendix}


\end{document}